\newtheorem{theorem}{Theorem}[section]
\newtheorem{proposition}[theorem]{Proposition}
\newtheorem{corollary}[theorem]{Corollary}
\newtheorem{construction}[theorem]{Construction}
\newtheorem{thm}{Theorem}[section]
\newtheorem{prop}{Proposition}[section]
\newtheorem{cor}{Corollary}[section]
\newtheorem{rem}{Remark}[section]
\newtheorem{exm}{Example}[section]
\numberwithin{equation}{section}
\begin{document}

\title{Some Improvements on Locally Repairable Codes}
\author[J. Zhang]{Jun Zhang}
\address{School of Mathematical Sciences, Capital Normal University, Beijing 100048, P.R. China}
\email{junz@cnu.edu.cn}

\author[X. Wang]{Xin Wang}
\address{School of Mathematical Sciences, Zhejiang University, Hangzhou 310027, P.R. China}
\email{11235062@zju.edu.cn}
\author[G. Ge]{Gennian Ge}
\address{School of Mathematical Sciences, Capital Normal University, Beijing 100048, P.R. China}
\email{gnge@zju.edu.cn}
\thanks{The research of Gennian Ge was supported by the National Natural Science Foundation of China under Grant No. 61171198 and Grant No. 11431003, the
Importation and Development of High-Caliber Talents Project of Beijing Municipal Institutions, and Zhejiang Provincial Natural Science Foundation of China
under Grant No. LZ13A010001}
\maketitle

 \begin{abstract}
The locally repairable codes (LRCs) were introduced to correct erasures efficiently in distributed storage systems. LRCs are extensively studied recently.
 In this paper, we first deal with the open case remained in \cite{q} and derive an improved upper bound for the minimum distances of LRCs. We also give an explicit construction for LRCs attaining this bound. Secondly, we consider the constructions of LRCs with any locality and availability which have high code rate and minimum distance as large as possible. We give a graphical model for LRCs. By using the deep results from graph theory, we construct a family of LRCs with any locality $r$ and availability $2$ with code rate $\frac{r-1}{r+1}$ and optimal minimum distance $O(\log n)$ where $n$ is the length of the code.
\end{abstract}

\section{Introduction}
In distributed storage systems, redundancy should be introduced to protect data against device failures. The simplest and most widespread technique used for data recovery is replication. However, this strategy entails large storage overhead and is nonadaptive for modern systems supporting the ``Big Data" environment. To improve the storage efficiency, erasure codes are employed, such as Windows Azure\cite{5}, Facebook's Hadoop cluster~\cite{14}, where the original data are divided into $k$ equal-sized fragments and then encoded into $n$ fragments $(n\ge k)$ stored in $n$ different nodes. It can tolerate up to $d-1$ node failures, where $d$ is the minimum distance of the erasure code. Particularly, the maximum distance separable (MDS) code is a kind of erasure code that attains the optimal
minimum distance with respect to the Singleton bound and thus provides the highest level of fault tolerance
for given storage overhead. However the MDS code is inefficient when we consider the disk I/O complexity, repair-bandwidth and so on.

To improve this, Gopalan et al.~\cite{3}, Oggier and Datta~\cite{7}, and Papailiopoulos et al.~\cite{10} introduced the concept of repair locality for erasure codes. The $i$th coordinate of a code has repair locality $r$ if it
can be recovered by accessing at most $r$ other coordinates. In this paper, an LRC is referred to an $[n,k]$ linear code
with all symbol locality $r$. When $r\ll k$, it greatly reduces the disk I/O complexity for repair.

Considering the fault tolerance level, the minimum distance is also a key metric for LRCs. Gopalan et al.~\cite{3} first derived the following upper bound for codes with information locality:
\begin{equation}\label{2}
d\le n-k+1-(\lceil \frac{k}{r}\rceil-1)
\end{equation}
which is a tight bound by the construction of pyramid codes~\cite{4}. Although the bound (\ref{2}) certainly holds
for all LRCs, it is not tight in many cases. Later, in \cite{9,2}, the bound (\ref{2}) was generalized to vector codes and nonlinear codes. 
In order to consider multiple erasures in local repair, two different models were put forward independently by Prakash et al.~\cite{11} and Wang et al.~\cite{WZL15}.

For simplicity, the LRC that achieves the upper bound (\ref{2}) with equality is called an optimal (maximum) LRC in this paper. The first optimal LRCs for the case $(r + 1)|n$ were constructed explicitly in \cite{18} and \cite{15} by using Reed-Solomon codes and Gabidulin codes respectively. Both constructions were built over a finite field whose
size is exponential in the code length $n$. In \cite{19} for the same case $(r + 1)|n$ the authors constructed an optimal code over a finite field of size comparable to $n$ by using specially designed polynomials. This construction can be extended to the case $(r+1)\nmid n$ with the minimum distance $d\ge n-k-\lceil \frac{k}{r}\rceil+1$ which is at most one less than the upper bound~(\ref{2}). In~\cite{BargTV15,TamoBGC15,ZehY15}, the authors generalized this idea to the cyclic codes and algebraic geometry codes.

Recently, Song et al.\cite{16} carefully studied the  tightness of the bound (\ref{2}), and left two open cases. Another recent improvement was due to \cite{12} where Prakash et al. showed a new upper bound on the minimum distance for LRCs. This bound relies on a sequence of recursively defined parameters and is tighter than the bound (\ref{2}). But no general constructions attaining this new bound was presented. A great improvement for this problem is made by Wang and Zhang in\cite{q}. The authors carried out an in-depth study of the two problems: what is the largest possible minimum distance for an $[n,k]$ LRC? How to construct an $[n,k]$ LRC with the largest possible minimum distance? For the first problem, they derived an integer programming based upper bound on the minimum distance for LRCs, and then gave an explicit bound by solving the integer programming problem. The explicit bound applies all LRCs satisfying $n_1>n_2$, where $n_1=\lceil\frac{n}{r+1}\rceil$ and $n_2=n_1(r+1)-n$. For the second problem, they presented a construction of linear LRCs that attains the explicit bound for $n_1 > n_2$. Therefore, they had completely solved the two problems under the condition $n_1 > n_2$. A similar result can be found in~\cite{WFEH15} using matroid theory.

In this paper, we first deal with the open case remained in [40] and derive an improved upper bound for
the minimum distances of LRCs. We also give an explicit construction for LRCs attaining this bound.

There are lots of other works devoted to the locality in the handling of multiple node failures, such as \cite{13,17,19}, considering LRCs which permit parallel access of ``hot data", the works of \cite{8,20} studying LRCs with general local repair groups, and the work \cite{12} which proposed sequential local repair.
Very recently, Wang et al.~\cite{WZL15} proposed a binary LRC construction achieving any locality and availability with very high code rate.
An LRC code $C\, [n,k,d]$ is said to have locality $r$ and availability $t$, if for any codeword $y\in C$, any symbol $y_i$ of $y$
can be computed from some other $r$ symbols of $y$, and furthermore there are $t$ disjoint ways to reconstruct $y_i$. Unfortunately, the minimum distance of the codes constructed in~\cite{WZL15} is too small, saying $t+1$.

The second part of this paper deals with constructions of binary LRCs $C\, [n,k,d]$ with any locality $r$ and availability $2$ which have both high code rate and large minimum distance. We first give a graphical model for binary LRCs. We then use graphs with long girth to give a high rate code construction. Comparing with the constructions~\cite{WZL15,12}, our codes have a slight decline of rate, however, our codes have much larger minimum distance ($d=O(\log n)$).

This paper is organized as follows. Section~\ref{sec2} reviews some elementary results that will be used in this paper.
Section 3 solves the integer programming problem put forward in \cite{q}, and gives an explicit upper bound for LRCs satisfying $n_1\le n_2$. Then Section 4 presents an explicit construction attaining this bound. Section 5 gives a construction of a family of LRCs with any locality $r$ and availability $2$ having code rate $\frac{r-1}{r+1}$ and minimum distance $O(\log n)$ where $n$ is the length of the code.
Finally, Section 6 concludes the paper.

\section{Preliminaries}\label{sec2}
In~\cite{q}, the authors derived an integer programming based bound on the minimum distance of any LRC. Define
\begin{equation}\label{1}
\Psi(x)=\max_{s,t_1,\ldots,t_s \atop a_1,\ldots,a_s}\min_{l,h_1,\ldots,h_l}(xr+1-\sum_{i=1}^{l-1}(a_{h_i}-t_{h_i})), \forall 1\le x \le n_1,
\end{equation}
where $s,t_1,\ldots,t_s,a_1,\ldots,a_s$ satisfy
\[\left\{
\begin{array}{ccccccc}
t_1&+&\cdots&+&t_s&=&n_1;\\
a_1&+&\cdots&+&a_s&=&n_2;\\
a_i&\ge& t_i-1&,&\forall&1\le i\le s;\\
s\ge 1&;&t_i\ge 1&,&\forall& 1\le i\le s.
\end{array}
\right.
\]
and $l,h_1,\ldots,h_l$ satisfy
\begin{equation}\label{c}
t_{h_1}+\ldots+t_{h_{l-1}}<x\le t_{h_1}+\ldots+t_{h_l}.
\end{equation}

\begin{theorem}[\cite{q}]\label{5}
For any $[n,k,d]$ LRC,
\[
d\le n-k+1-\eta,
\]
where $\eta=\max\{x:\Psi(x)-x<k\}$.
\end{theorem}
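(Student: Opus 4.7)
The plan is to apply the classical Singleton-style bounding technique: to prove $d \le n - k + 1 - \eta$, it suffices to exhibit a set $T \subseteq [n]$ with $|T| \ge k - 1 + \eta$ such that the projection $\pi_T : C \to \mathbb{F}^{|T|}$ has rank at most $k - 1$. Then $\ker(\pi_T) \cap C$ contains a nonzero codeword whose support lies in $[n] \setminus T$, hence of weight at most $n - |T| \le n - k + 1 - \eta$.

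First, I would extract a canonical repair structure. By the definition of $r$-locality, every coordinate belongs to some repair set of size at most $r + 1$. A counting/greedy argument (essentially Lemma-level content from \cite{q}) produces a family $\mathcal{R}$ of $n_1 = \lceil n/(r+1)\rceil$ repair sets covering $[n]$, whose total overlap/undersizing deficit is exactly $n_2 = n_1(r+1) - n$. The combinatorial shape of $\mathcal{R}$ is then encoded by grouping the repair sets into $s$ \emph{meta-groups}, where the $i$-th meta-group contains $t_i$ repair sets having collective deficit $a_i$; the constraints $t_i \ge 1$, $a_i \ge t_i - 1$, $\sum t_i = n_1$, $\sum a_i = n_2$ precisely characterize which meta-group structures can arise. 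Since we do not know a priori which structure the code realizes, the outer maximum in the definition of $\Psi(x)$ must range over all admissible $(s, t_i, a_i)$.

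Second, I would build $T$ greedily. For a parameter $x$ and any ordering data $(l, h_1, \ldots, h_l)$ satisfying (\ref{c}), let $T$ be the union of all repair sets contained in meta-groups $h_1, \ldots, h_{l-1}$ together with exactly enough repair sets from meta-group $h_l$ to bring the total to $x$ repair sets. A careful bookkeeping gives a lower bound on $|T|$ in terms of $xr$ and the partial overlaps, while each repair set contributes at least one local parity, reducing the rank of $\pi_T(C)$ accordingly. Arranging the accounting so that the rank bound takes the form
\[
\mathrm{rk}(\pi_T(C)) \;\le\; \Bigl(xr + 1 - \sum_{i=1}^{l-1}(a_{h_i} - t_{h_i})\Bigr) - x,
\]
and noting that the inner minimum over $(l, h_i)$ reflects our freedom in choosing the greedy ordering, we obtain $\mathrm{rk}(\pi_T(C)) \le \Psi(x) - x$ for a suitable $T$.

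Finally, setting $x = \eta$, the defining inequality $\Psi(\eta) - \eta < k$ forces $\mathrm{rk}(\pi_T(C)) \le k - 1$, while the matching lower bound on $|T|$ yields $|T| \ge k - 1 + \eta$, completing the argument. The principal obstacle is the combinatorial bookkeeping inside the greedy step: tracking exactly how coordinates accumulate in $T$ when the last meta-group $h_l$ is only partially consumed (which is the source of the mysterious ``$+1$'' and the partial sum $\sum_{i=1}^{l-1}(a_{h_i} - t_{h_i})$ in $\Psi$), together with verifying that the local parities remain linearly independent enough to deliver the claimed rank reduction uniformly across all admissible structures. Both technical points require inductive arguments already developed in \cite{q}.
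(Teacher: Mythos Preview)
The paper does not prove this theorem at all: it is quoted verbatim from \cite{q} as background in the Preliminaries section, and the text immediately moves on (``Next, we review the construction of Tamo and Barg\ldots''). So there is no in-paper proof to compare your proposal against.

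For what it is worth, your sketch is a plausible reconstruction of the argument one expects in \cite{q}: build a large coordinate set $T$ from unions of repair groups, use the local parities to bound $\mathrm{rk}(\pi_T(C))$, and invoke the Singleton-type mechanism. The outer max over $(s,t_i,a_i)$ reflecting ignorance of the code's actual repair-group overlap pattern, and the inner min over $(l,h_i)$ reflecting the bounder's freedom to choose the greedy order, are the right roles for those quantifiers. The honest caveat is that your accounting of $|T|$ versus $\mathrm{rk}(\pi_T(C))$ is asserted rather than carried out; in particular you would need to show $|T|\ge k-1+\eta$ and $\mathrm{rk}(\pi_T(C))\le \Psi(\eta)-\eta$ simultaneously for the \emph{same} set $T$, and your sketch only derives the rank side explicitly. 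But since the paper itself defers the entire argument to \cite{q}, this is moot for the present comparison.
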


Next, we review the construction of Tamo and Barg \cite{19} as their construction gives some optimal codes for the bound we will obtain later. Furthermore, we will employ their construction to get more optimal codes meeting our bound.

Let $A\subset F$, and let $\mathcal{A}$ be a partition of $A$ into $m$ subsets $A_i$. Consider the set of polynomials $F_{\mathcal{A}}[x]$ of degree
less than $|A|$ that are constant on the blocks of the partition:
\[
F_\mathcal{A}[x] = \{f\in F[x] : f \textrm{ is constant on } A_i , i = 1, \ldots ,m;\, \deg f < |A|\}.
\]
The annihilator of $A$ is the smallest-degree monic polynomial $h_A$ such that $h_A(a)=0$ if $a\in A$, i.e., $h_A(x)=\prod_{a\in A}(x-a)$. Observe that the set $F_\mathcal{A}[x]$ with the usual addition and multiplication modulo $h(x)$ becomes
a commutative algebra with identity. Since the polynomials $F_\mathcal{A}[x]$ are constant on the elements of $\mathcal{A}$, we write
$f(A_i)$ to refer to the value of the polynomial $f$ on the set $A_i\in \mathcal{A}$.

\begin{proposition}[\cite{19}]\label{3}
Let $\alpha_1,\cdots,\alpha_m$ be distinct nonzero elements of $F$, and let $g$ be the polynomial of degree $\deg(g)<|A|$ that satisfies $g(A_i)=\alpha_i$ for all $i=1,\cdots,m$, i.e.,
\[
g(x)=\sum_{i=1}^{m}\alpha_i\sum_{a\in A_i}\prod_{b\in A\backslash a}\frac{x-b}{a-b}.
\]
Then the polynomials $1,g,\cdots,g^{m-1}$ form a basis of $F_\mathcal{A}[x]$.
\end{proposition}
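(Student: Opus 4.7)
The plan is to verify three facts that together imply the conclusion: (i) that $F_{\mathcal{A}}[x]$ has dimension $m$ over $F$; (ii) that each power $g^j$ (reduced modulo $h_A$) lies in $F_{\mathcal{A}}[x]$; and (iii) that $1, g, g^2, \ldots, g^{m-1}$ are $F$-linearly independent in this algebra. Since (i) and (iii) together would give $m$ linearly independent elements in an $m$-dimensional space, this forces $\{1, g, \ldots, g^{m-1}\}$ to be a basis.

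For (i), I would invoke Lagrange interpolation to identify the ambient space $F[x]_{<|A|}$ with $F^{A}$ via the evaluation map $f \mapsto (f(a))_{a\in A}$. Under this identification, $F_{\mathcal{A}}[x]$ corresponds exactly to the subspace of tuples that are constant on each block $A_i$, and a dimension count is immediate: one free scalar per block, giving $m$ in total.

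For (ii), I would use the defining property $g(a) = \alpha_i$ for every $a \in A_i$, which forces $g^j(a) = \alpha_i^j$ on $A_i$. Reducing modulo $h_A$ does not alter any value on $A$ (since $h_A$ vanishes on $A$), so the reduction has degree less than $|A|$ and is still constant on each $A_i$, placing it in $F_{\mathcal{A}}[x]$. For (iii), a dependence $\sum_{j=0}^{m-1} c_j g^j \equiv 0 \pmod{h_A}$, evaluated at any point of $A_i$, yields $\sum_{j=0}^{m-1} c_j \alpha_i^j = 0$ for each $i = 1, \ldots, m$; since the $\alpha_i$ are distinct, the associated $m \times m$ Vandermonde matrix is invertible, so all $c_j$ vanish.

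The only genuine subtlety I anticipate lies in step (ii): one must remember that the algebra structure on $F_{\mathcal{A}}[x]$ uses multiplication in the quotient $F[x]/(h_A)$ rather than in $F[x]$ itself, since ordinary polynomial products of elements of $F_{\mathcal{A}}[x]$ can have degree at least $|A|$. Once this bookkeeping is kept straight, the remainder of the argument is a routine combination of Lagrange interpolation and the nonvanishing of the Vandermonde determinant, and I expect no further obstacles.
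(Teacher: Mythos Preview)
The paper does not actually prove this proposition: it is quoted verbatim from Tamo and Barg~\cite{19} and stated without proof, so there is no ``paper's own proof'' to compare against. That said, your argument is correct and complete. The three-step structure---identifying $\dim_F F_{\mathcal{A}}[x]=m$ via the evaluation isomorphism $F[x]_{<|A|}\cong F^{A}$, checking that each $g^j\bmod h_A$ is constant on blocks, and deducing linear independence from the nonvanishing Vandermonde determinant in the $\alpha_i$---is exactly the standard route, and it matches in spirit the argument given in the original Tamo--Barg paper. Your remark about the need to reduce modulo $h_A$ before speaking of ``the element $g^j$'' inside $F_{\mathcal{A}}[x]$ is well taken and is the only point where one could slip up; you have handled it correctly.
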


\begin{proposition}[\cite{19}]
There exist $m$ integers $0=d_0<d_1<\cdots<d_{m-1}<|A|$ such that the degree of each polynomial in $F_\mathcal{A}[x]$ is $d_i$ for some $i$.
\end{proposition}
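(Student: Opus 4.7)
The plan is to use dimension counting together with the basis produced by Proposition~\ref{3}. Set
\[
D = \{\deg f : f\in F_\mathcal{A}[x],\ f\ne 0\},
\]
and show that $|D|=m$, after which the claimed integers $d_0<d_1<\cdots<d_{m-1}$ are simply the elements of $D$ listed in increasing order. The inequalities $d_0\ge 0$ and $d_{m-1}<|A|$ are then built into the definition of $F_\mathcal{A}[x]$, and $d_0=0$ holds because the constant polynomial $1$ is constant on every block $A_i$ and hence lies in $F_\mathcal{A}[x]$.

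First I would establish the upper bound $|D|\le m$. This is the standard observation that nonzero polynomials of pairwise distinct degrees are linearly independent: if $f_1,\ldots,f_r\in F_\mathcal{A}[x]$ have strictly increasing degrees, then the leading-term matrix (in a monomial basis) is upper triangular with nonzero diagonal, so any nontrivial linear relation would force the term of highest degree to vanish, contradiction. Thus $|D|\le \dim F_\mathcal{A}[x]=m$, using that Proposition~\ref{3} provides the $m$-element basis $1,g,\ldots,g^{m-1}$.

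Next I would establish the matching lower bound $|D|\ge m$ directly from the same basis. Since $g$ is a polynomial of positive degree (indeed $g$ is not constant on $A$ whenever $m\ge 2$, because $g(A_i)=\alpha_i$ takes distinct values by hypothesis), the powers $1,g,g^2,\ldots,g^{m-1}$ have degrees $0,\deg g, 2\deg g,\ldots,(m-1)\deg g$, which are $m$ distinct non-negative integers all less than $|A|$. In particular, $D$ contains at least $m$ distinct values, so $|D|=m$.

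There is essentially no hard step here; the one point to be careful about is handling the degenerate case $m=1$ (where $F_\mathcal{A}[x]$ is just the constants and the claim reduces to $d_0=0$) and ensuring that $\deg g\ge 1$ when $m\ge 2$, which I would read off from the construction of $g$ in Proposition~\ref{3} together with the fact that the values $\alpha_i$ are distinct. Once $|D|=m$ is in hand, enumerating $D=\{d_0<d_1<\cdots<d_{m-1}\}$ completes the proof.
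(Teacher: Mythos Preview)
Your upper bound $|D|\le m$ is fine, but the lower bound argument has a genuine gap. You assert that the basis elements $1,g,g^2,\ldots,g^{m-1}$ from Proposition~\ref{3} have degrees $0,\deg g,2\deg g,\ldots,(m-1)\deg g$, all less than $|A|$. But multiplication in $F_\mathcal{A}[x]$ is taken \emph{modulo $h_A(x)$}, so $g^i$ as an element of $F_\mathcal{A}[x]$ is the reduction of the ordinary $i$th power modulo $h_A$, and there is no reason its degree should equal $i\deg g$. Proposition~\ref{3} only guarantees $\deg g<|A|$, not $\deg g<|A|/m$; for a generic choice of the $\alpha_i$ the interpolant $g$ has degree $|A|-1$, so already $g^2$ must be reduced and its degree is then unknown. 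Hence you cannot read off $m$ distinct degrees from this particular basis.

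The fix is to avoid the $g^i$ entirely and argue directly at the level of vector spaces: for any finite-dimensional subspace $V\subset F[x]$, the number of degrees occurring among nonzero elements of $V$ equals $\dim V$. You have already proved the $\le$ direction. For $\ge$, take any basis of $V$ and perform Gaussian elimination on leading terms to produce a basis whose members have pairwise distinct degrees; this exhibits $\dim V$ distinct elements of $D$. Applying this with $V=F_\mathcal{A}[x]$ and $\dim V=m$ (which does follow from Proposition~\ref{3}) gives $|D|=m$, and the rest of your write-up goes through unchanged. The paper itself does not prove this proposition, deferring to \cite{19}, where the argument is along these lines rather than via explicit degrees of the $g^i$.
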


\begin{corollary}[\cite{19}]
Assume that $d_1=r+1$, namely there exists a polynomial $g$ in $F_\mathcal{A}[x]$ of degree $r+1$, then $d_i=i(r+1)$ for all $i=0,\cdots,m-1$, and the polynomials $1,g,\cdots,g^{m-1}$ defined in Proposition~\ref{3}, form a basis for $F_\mathcal{A}[x]$.
\end{corollary}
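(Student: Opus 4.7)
The plan is to produce an explicit basis of $F_\mathcal{A}[x]$ consisting of the $m$ powers $1, g, g^2, \ldots, g^{m-1}$ and then to read off the degree sequence $\{d_i\}$ directly from this basis. The previous proposition tells me that only $m$ distinct degrees $d_0 < d_1 < \cdots < d_{m-1}$ are realized by nonzero polynomials in $F_\mathcal{A}[x]$; by picking one polynomial of each degree I obtain $m$ linearly independent elements, and a routine degree-reduction argument (subtract off the leading term using a suitable scalar multiple of the polynomial realizing that degree) shows they also span. Hence $\dim_F F_\mathcal{A}[x] = m$ is available for free.

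First I would verify that every power $g^j$ with $0 \le j \le m-1$ lies in $F_\mathcal{A}[x]$. Constancy on blocks is automatic: if $g$ takes the value $\beta_i$ on $A_i$, then $g^j$ takes the value $\beta_i^j$ on $A_i$, so $g^j$ is still constant on each block. For the degree side, the Tamo--Barg setting uses blocks of uniform size $r+1$, so $|A|=m(r+1)$ and therefore $\deg g^j = j(r+1) \le (m-1)(r+1) < |A|$, placing $g^j$ inside $F_\mathcal{A}[x]$.

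Next, the powers $1, g, g^2, \ldots, g^{m-1}$ are linearly independent over $F$ because their degrees $0, r+1, 2(r+1), \ldots, (m-1)(r+1)$ are pairwise distinct. Combined with $\dim_F F_\mathcal{A}[x] = m$, these $m$ independent elements must form a basis. Finally, the degree set $\{d_0, d_1, \ldots, d_{m-1}\}$ has exactly $m$ elements and it contains the $m$ distinct values $\{j(r+1) : 0 \le j \le m-1\}$ realized by my basis, so the two sets coincide; matching them in increasing order fixes $d_i = i(r+1)$.

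The only real subtlety I foresee is the degree bound $\deg g^{m-1} < |A|$, which genuinely relies on the uniform block-size assumption of the Tamo--Barg construction; once that is in hand, everything collapses to degree counting. Notice that Proposition~\ref{3} is not strictly required along this route, although an alternative approach would be to shift $g$ by a scalar $c$ so that $g+c$ takes distinct nonzero values on the blocks (distinctness follows because two coinciding values on blocks of size $r+1$ would force $g-\beta$ of degree $r+1$ to vanish on $2(r+1)$ points) and then invoke Proposition~\ref{3} to produce the same basis.
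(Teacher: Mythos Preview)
The paper does not supply its own proof of this corollary; it is quoted from Tamo and Barg~\cite{19} as background material. Your argument is correct and is essentially the argument in~\cite{19}: once $\dim_F F_\mathcal{A}[x]=m$ is in hand, the powers $1,g,\dots,g^{m-1}$ have pairwise distinct degrees $0,r+1,\dots,(m-1)(r+1)$, so they are independent and hence a basis, which pins down the $d_i$. You are also right to flag the degree bound $\deg g^{m-1}<|A|$ as the only genuine issue and to tie it to the uniform block size $|A_i|=r+1$; that is indeed the standing hypothesis in the relevant part of~\cite{19}, and without it the formula $d_i=i(r+1)$ can fail (which is why Construction~\ref{4} in this paper, allowing one short block, only assumes the spanning property and does not invoke the degree formula).
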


\begin{construction}[\cite{19}]\label{4}
1. Let $F$ be a finite field, and let $A\subset F$ be a subset such that $|A|=n$, $n \bmod(r + 1)=s\neq0,1$. Assume also that $k+1$ is divisible by $r$ (this assumption is nonessential).

2. Let $\mathcal{A}$ be a partition of $A$ into $m$ subsets $A_1,\cdots,A_m$ such that $|A_i|=r+1,1\le i\le m-1$ and $1< |A_m|=s<r+1$. Let $g(x)$ be a polynomial of degree $r+1$, such that its powers $1,g,\cdots,g^{m-1}$ span the algebra $F_\mathcal{A}[x]$. W.L.O.G., assume that $g$ vanishes on the set $A_m$, otherwise one can take the powers of the polynomial $g(x)-g(A_m)$ as the basis for the algebra.

3. Let $a = (a_0, \cdots, a_{r-1})\in F^k$ be the input information vector, such that each $a_i$ for $i\neq s-1$ is a vector
of length $\frac{k+1}{r}$ and $a_{s-1}$ is of length $\frac{k+1}{r}-1$. Define the encoding polynomial
\[
f_a(x)=\sum_{i=0}^{s-2}\sum_{j=0}^{\frac{k+1}{r}-1}a_{i,j}g(x)^jx^i+\sum_{j=1}^{\frac{k+1}{r}-1}a_{s-1,j}g(x)^jx^{s-1}+\sum_{i=s}^{r-1}
\sum_{j=0}^{\frac{k+1}{r}-1}a_{i,j}g^j(x)x^{i-s}h_{A_m}(x).
\]
The code is defined as the set of evaluations of $f_a(x)$, $a\in F^k$.
\end{construction}

\begin{theorem}[\cite{19}]\label{Barg}
The code given by Construction~\ref{4} is an $[n,k,r]$ LRC code with minimum distance satisfying
\[
d\ge n-k-\lceil\frac{k}{r}\rceil+1.
\]
\end{theorem}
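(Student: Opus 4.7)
The plan is to verify three properties of the code produced by Construction~\ref{4}: dimension $k$, locality $r$, and minimum distance at least $n-k+1-\lceil k/r\rceil$. Throughout I would set $q=(k+1)/r$, which is an integer by hypothesis and equals $\lceil k/r\rceil$ since $k=qr-1$. A direct count of summands in $f_a$ gives $(s-1)q+(q-1)+(r-s)q=rq-1=k$, so it suffices to show that the $k$ polynomials $\{g^jx^i\}$ and $\{g^jx^{i-s}h_{A_m}\}$ appearing in $f_a$ are linearly independent in $F[x]$. Each of them has total degree $j(r+1)+i$, with $0\le j\le q-1$ and $i\in\{0,\ldots,r-1\}$ (omitting $i=s-1$ when $j=0$); these degrees are pairwise distinct, since the gap between consecutive $j$-blocks is $r+1>r-1$. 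Hence the family is free over $F$, the encoding $a\mapsto f_a$ is injective, and the code has dimension $k$.

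Next I would verify locality by restricting $f_a$ to a single block $A_i$. Since every $g^j$ is constant on $A_i$, the restriction is a polynomial in $x$ alone. For $i<m$, the worst contribution comes from $x^{i-s}h_{A_m}(x)$ with $i\le r-1$, which has degree $(i-s)+s=i\le r-1$; thus $\deg f_a|_{A_i}\le r-1<|A_i|=r+1$, so any $r$ of the $r+1$ symbols on $A_i$ interpolate the remaining one. For $i=m$, the third sum vanishes entirely because $h_{A_m}(a)=0$ for $a\in A_m$, leaving a polynomial of degree at most $s-1<|A_m|=s$, which is determined by any $s-1$ of its values. In either case each coordinate can be recovered from at most $r$ others, so the locality is $r$.

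For the minimum distance I would bound $\deg f_a$ globally: every summand is $g(x)^j$ (of degree $j(r+1)$) times a factor of degree at most $r-1$, so $\deg f_a\le(q-1)(r+1)+(r-1)=k-1+q=k-1+\lceil k/r\rceil$. Consequently each nonzero codeword vanishes at no more than $k-1+\lceil k/r\rceil$ points of $A$, giving weight at least $n-k+1-\lceil k/r\rceil$, as claimed. The main subtlety I expect is the linear-independence step---one must check that the three mismatched blocks of monomials (in particular, the exclusion of the $j=0$ term in the middle sum, compensated for by the extra $h_{A_m}$-factor in the last block) together yield a $k$-dimensional subspace of $F[x]$; once injectivity is in hand, the locality argument and the distance bound reduce to routine degree bookkeeping.
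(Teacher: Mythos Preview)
The paper does not prove this theorem itself; it is quoted from \cite{19} as a preliminary result. The paper does, however, prove the closely analogous Theorem~\ref{tightthm} for Construction~\ref{const}, and your approach---a global degree bound $\deg f_a\le (q-1)(r+1)+(r-1)$ for the distance, together with blockwise interpolation for the locality---is exactly the method used there (and in \cite{19}). Your count of summands, the pairwise-distinct-degree argument for injectivity, and the distance bound are all correct.

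There is a genuine gap in your locality argument on the short block $A_m$. You only invoke $h_{A_m}|_{A_m}=0$ to kill the third sum, and then assert that what remains has degree at most $s-1$ and ``is determined by any $s-1$ of its values.'' That inference is false: a polynomial of degree $\le s-1$ needs $s$ evaluations to be determined, and since $|A_m|=s$ you would get no recovery at all. The missing ingredient is the W.L.O.G.\ assumption in Step~2 of the construction that $g$ vanishes on $A_m$. This also kills the entire second sum and every $j\ge 1$ term of the first sum on $A_m$, leaving only $\sum_{i=0}^{s-2}a_{i,0}x^i$, of degree at most $s-2$; that polynomial is indeed determined by any $s-1$ of the $s$ symbols on $A_m$. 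This is precisely how the paper handles the small block in the proof of Theorem~\ref{tightthm} (``by interpolating the other $s-1$ points \ldots\ we get a polynomial of degree at most $s-2$''). Once you insert this one observation, your proof is complete and matches the intended argument.
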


\section{Upper Bounds of the Minimum Distance}
In this section, we solve the integer programming problem~(\ref{5}), and derive an explicit upper bound for all LRCs satisfying $n_1\le n_2$. Then we make comparisons with the bound~(\ref{2}) to show the improvements of our explicit bound. Actually, in the next section we will show our bound is tight for the case $n_1\le n_2$.

\begin{theorem}\label{valueofpsi}
For $1\le x\le n_1$ and $n_1\le n_2$,
\[
\Psi(x)=xr+1.
\]
\end{theorem}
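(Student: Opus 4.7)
The plan is to establish $\Psi(x) = xr+1$ by proving two inequalities, using very different arguments for each direction.

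\medskip
\noindent\textbf{Lower bound $\Psi(x) \ge xr+1$.} I would exhibit a specific feasible choice of outer parameters that forces the inner minimum to equal $xr+1$. Namely, take $s = 1$, $t_1 = n_1$, $a_1 = n_2$. The feasibility constraint $a_1 \ge t_1 - 1$ becomes $n_2 \ge n_1 - 1$, which holds by the hypothesis $n_1 \le n_2$. With only one outer index available, the inner constraint $t_{h_1}+\cdots+t_{h_{l-1}} < x \le t_{h_1}+\cdots+t_{h_l}$ forces $l = 1$ and $h_1 = 1$ (since $x \le n_1 = t_1$). The empty sum $\sum_{i=1}^{0}(a_{h_i}-t_{h_i}) = 0$, so the inner expression is exactly $xr+1$. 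This gives $\Psi(x) \ge xr+1$.

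\medskip
\noindent\textbf{Upper bound $\Psi(x) \le xr+1$.} Here I must show that for every feasible $(s, t_1, \ldots, t_s, a_1, \ldots, a_s)$, there is at least one valid inner choice $(l, h_1, \ldots, h_l)$ whose corresponding partial sum $\sum_{i=1}^{l-1}(a_{h_i}-t_{h_i})$ is non-negative. I would split into two cases. If some $t_j \ge x$, then $(l,h_1)=(1,j)$ is valid with empty partial sum equal to $0$, and we are done. Otherwise all $t_j < x$, forcing $l \ge 2$ in every valid choice. In this case, the plan is a greedy argument: reorder the indices as $\pi_1, \ldots, \pi_s$ so that the differences $a_{\pi_i} - t_{\pi_i}$ are non-increasing, then let $l$ be the smallest index with $t_{\pi_1} + \cdots + t_{\pi_l} \ge x$. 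Such an $l$ exists because $\sum_i t_i = n_1 \ge x$, and by minimality the inner constraint is satisfied with $h_i = \pi_i$.

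\medskip
\noindent\textbf{The key arithmetic step.} What remains is a clean inequality: the sum of the largest $l-1$ of the values $a_i - t_i$ must be non-negative. I would prove the general fact that if $d_1 \ge d_2 \ge \cdots \ge d_s$ with total $D = \sum_i d_i \ge 0$, then $\sum_{i=1}^{j} d_i \ge jD/s$ for every $j \le s$. This follows because the average of the top $j$ is at least the average of the bottom $s-j$, yielding $A(s-j) \ge (D-A)j$ and hence $A \ge jD/s$. Applying this with $d_i = a_{\pi_i} - t_{\pi_i}$ and $D = n_2 - n_1 \ge 0$ (since $n_1 \le n_2$) to $j = l-1$ gives $\sum_{i=1}^{l-1}(a_{h_i} - t_{h_i}) \ge 0$, as required.

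\medskip
\noindent\textbf{Main obstacle.} The lower bound is essentially a one-line verification; the real content is the upper bound, and within that, the subtlety is arranging the greedy choice so that the two competing constraints---crossing the threshold $x$ with the prefix $t$-sum, and keeping the prefix $(a-t)$-sum non-negative---are compatible. The averaging lemma is exactly what decouples them, and the hypothesis $n_1 \le n_2$ enters precisely to make the total $D = n_2 - n_1$ non-negative. Without that hypothesis the greedy argument breaks down, which matches the fact (from \cite{q}) that the explicit value of $\Psi(x)$ is different in the regime $n_1 > n_2$.
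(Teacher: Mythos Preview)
Your proof is correct, and the lower bound is identical to the paper's. For the upper bound, however, your route is genuinely different from the paper's and in fact more streamlined.

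The paper argues by contradiction: assuming $\Psi(x)\ge xr+2$, it sorts the differences $b_i=a_i-t_i$ in \emph{increasing} order, then greedily partitions $[s]$ into consecutive blocks each of whose $t$-prefix crosses $x$. Each such block gives a valid inner choice, forcing its $b$-prefix sum to be $\le -1$; since the total $\sum b_i=n_2-n_1\ge 0$, the final (incomplete) block must have positive $b$-sum. The contradiction is then extracted by re-examining the last two blocks in reverse order and exploiting the monotonicity of the $b_i$. This works, but the final reversal step is somewhat delicate.

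You instead sort the differences in \emph{decreasing} order, take a single greedy prefix until the $t$-sum reaches $x$, and observe that the prefix $(a-t)$-sum is automatically non-negative by the elementary averaging inequality $\sum_{i=1}^{j}d_i\ge jD/s$ whenever $d_1\ge\cdots\ge d_s$ and $D=\sum d_i\ge 0$. This replaces the paper's block decomposition and reversal argument with a one-line lemma, and it makes transparent exactly where the hypothesis $n_1\le n_2$ (i.e., $D\ge 0$) is consumed. Both approaches handle the easy case ``some $t_i\ge x$'' identically; what your averaging argument buys is a direct, constructive witness for the inner minimum rather than an indirect contradiction.
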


\begin{proof}
1. Set
\[ \left\{
\begin{array}{ccc}
s&=&1,\\
t_1&=&n_1,\\
a_1&=&n_2.
\end{array} \right.
\]
Then we have
\[
\Psi(x)\ge \min_{l,h_1,\cdots,h_l}(xr+1-\sum_{i=1}^{l-1}(a_{h_i}-t_{h_i}))=xr+1.
\]

2. Assume that for some $1\le x\le n_1$,
\[
\Psi(x)\ge xr+2.
\]
Then there exist integers $s$ and $t_i,a_i,1\le i\le s$, satisfying the constraints of the integer programming and
\[
\min_{l,h_1,\cdots,h_l}(xr+1-\sum_{i=1}^{l-1}(a_{h_i}-t_{h_i}))\ge xr+2.
\]
Therefore for all integers $l$ and $h_1,\cdots,h_l \in[s]$ satisfying the constraint (\ref{c}), we have
\begin{equation}\label{d}
\sum_{i=1}^{l-1}(a_{h_i}-t_{h_i})\le -1.
\end{equation}

 If there is some $i$ such that $t_i\ge x$, let $h_1=i$ in the constraint (\ref{c}), then
\[
\sum_{i=1}^{l-1}(a_{h_i}-t_{h_i})=0,
\]
which contradicts to (\ref{d}). And hence, the assumption $\Psi(x)\ge xr+2$ does not hold, and we finish the proof.

So we assume $t_i<x$, $\forall 1\le i\le s$. For $1\le i\le s$, define
\[
b_i=a_i-t_i.
\]
W.L.O.G, we assume that $b_1\le b_2 \le \cdots \le b_s$.
Since $t_i<x$, we can find $i_0=1,i_1,i_2,\cdots,i_p<s$ satisfying that
\begin{align*}
t_1+\cdots+t_{i_1-1}<&x\le t_1+\cdots+t_{i_1},\\
t_{i_1}+\cdots+t_{i_2-1}<&x\le t_{i_1}+\cdots+t_{i_2},\\
&\ldots\\
t_{i_{p-1}}+\cdots+t_{i_p-1}<&x\le t_{i_{p-1}}+\cdots+t_{i_p},\\
t_{i_p}+\cdots+t_s<&x.
\end{align*}
Then we have
\begin{eqnarray*}
\sum_{i=1}^{i_1-1}b_i\le-1,\\
\sum_{i=i_1}^{i_2-1}b_i\le-1,\\
\ldots,\\
\sum_{i=i_{p-1}}^{i_p-1}b_i\le-1.
\end{eqnarray*}
Noting $\sum_{i=1}^{s}b_i=n_2-n_1\ge 0$, we get $\sum_{i=i_p}^{s}b_i> 0$. Because $x\le n_1$, then $p\ge 1$, we can consider the last two parts of $t_i$'s in the reverse order $t_s,\cdots,t_{i_p},\cdots,t_{i_{p-1}}$.
From the definition of $i_p$, we know $\sum_{m=s}^{i_{p-1}}t_m\ge x$ and $\sum_{m=s}^{i_{p-1}}b_m\ge 0$.
So there exists $q$ satisfying $\sum_{m=s}^{i_{q-1}}t_m< x\le \sum_{m=s}^{i_{q}}t_m$ and $\sum_{m=s}^{i_{q-1}}b_m\le -1$. On the other hand $b_{i_{p-1}}\le b_{i_p} \le \cdots \le b_s$, we get $b_{i_{p-1}},\cdots ,b_{i_q}<0$, which contradicts to $\sum_{m=s}^{i_{p-1}}b_m\ge 0$.

Thus we have $\Psi(x)\le xr+1$.
\end{proof}

\begin{theorem}\label{t}
For any $[n,k,d]$ LRC with $n_1\le n_2$, where $n_1=\lceil\frac{n}{r+1}\rceil$ and $n_2=n_1(r+1)-n$, it holds that
\begin{equation}\label{t1}
d\le n-k+1-(\lceil\frac{k-1}{r-1}\rceil-1).
\end{equation}
\end{theorem}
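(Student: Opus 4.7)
The plan is essentially a one-line computation built on top of Theorem \ref{5} and Theorem \ref{valueofpsi}. By Theorem \ref{5}, the bound reduces to evaluating $\eta = \max\{x : \Psi(x) - x < k\}$. Under the hypothesis $n_1 \le n_2$, Theorem \ref{valueofpsi} tells us that $\Psi(x) = xr+1$ for every $1 \le x \le n_1$, so the condition $\Psi(x)-x<k$ simplifies to $x(r-1)+1 < k$, i.e., $x < \frac{k-1}{r-1}$. The largest integer satisfying this strict inequality is $\bigl\lceil \frac{k-1}{r-1} \bigr\rceil - 1$, which will be our candidate value for $\eta$.

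The one thing that is not purely mechanical is checking that this candidate indeed lies in the domain $\{1,\dots,n_1\}$ on which Theorem \ref{valueofpsi} applies; otherwise the maximum would be cut off at $x=n_1$ and we would get a different expression. For this I would use the elementary LRC bound $k \le n - n_1$, valid whenever the local repair groups can be taken to be disjoint of size at least $2$ (each group of size $s_i \le r+1$ contributes at most $s_i - 1$ information symbols, and $\sum s_i = n$ over $n_1$ groups). Combining $k \le n-n_1$ with the hypothesis $n_1 \le n_2 = n_1(r+1)-n$, which is equivalent to $n \le n_1 r$, yields
\[
k \;\le\; n - n_1 \;\le\; n_1 r - n_1 \;=\; n_1(r-1),
\]
so $\frac{k-1}{r-1} \le n_1 - \frac{1}{r-1} < n_1$, giving $\bigl\lceil \frac{k-1}{r-1} \bigr\rceil - 1 \le n_1 - 1 < n_1$, as needed.

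Putting the pieces together: $\eta = \bigl\lceil \frac{k-1}{r-1} \bigr\rceil - 1$, and substituting into Theorem \ref{5} yields the claimed bound $d \le n - k + 1 - \bigl(\bigl\lceil \frac{k-1}{r-1} \bigr\rceil - 1\bigr)$. The main obstacle, to the extent there is one, is simply the domain check in the second paragraph, where the hypothesis $n_1 \le n_2$ does the real work by forcing $k$ to be small enough that the maximum of $\Psi(x)-x$ over $\{1,\dots,n_1\}$ is achieved strictly inside the interval rather than at the boundary.
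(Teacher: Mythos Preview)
Your proposal is correct and follows exactly the route the paper takes: the paper's proof is literally the one line ``This follows from Theorems~\ref{5} and~\ref{valueofpsi}.'' You have simply unpacked that line, and in doing so you have even supplied the domain check $\lceil \frac{k-1}{r-1}\rceil - 1 \le n_1$ (via $k \le n - n_1 \le n_1(r-1)$) that the paper leaves entirely implicit.
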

\begin{proof}
This follows from Theorems~~\ref{5} and \ref{valueofpsi}.
\end{proof}

Since the bound (\ref{t1}) in Theorem~\ref{t} holds for $n_1\le n_2$, all the comparisons we make below are under the condition $n_1\le n_2$.

The bound (\ref{2}) given by Gopalan et al.\cite{3} is the first upper bound on the minimum distance of LRCs. Since $r< k$ (the natural condition that LRCs require), we always have $\lceil\frac{k-1}{r-1}\rceil \ge \lceil\frac{k}{r}\rceil$. So the bound (\ref{t1}) generally provides a tighter upper bound than the bound (\ref{2}).

Specially, we assume $k=ur+v$ for some integers $u,v$ and $0\leq v\leq r-1$, then
\[
d\le n-k+1-(\lceil\frac{k-1}{r-1}\rceil-1)=\left\{
                                             \begin{array}{ll}
                                               n-k-u+1, & \hbox{$u+v\le r$;} \\
                                               u-k-u, & \hbox{$u+v>r$.}
                                             \end{array}
                                           \right.
\]

\section{Code Construction When $n_1\le n_2$}
In this section, we present an explicit construction of LRCs attaining the bound (\ref{t1}) in some cases. The idea of construction comes from \cite{19}.

\begin{theorem}
When $n_1\le n_2$ and $u+v>r$, $n_2\neq r$, the bound (\ref{t1}) is achievable.
\end{theorem}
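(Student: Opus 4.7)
The plan is to instantiate the Tamo--Barg construction (Construction~\ref{4}) with a partition matched to our parameters $(n,k,r,n_1,n_2)$ and show that the lower bound from Theorem~\ref{Barg} meets the upper bound in Theorem~\ref{t} under the hypotheses $n_1\le n_2$, $u+v>r$, and $n_2\ne r$.

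First, I would pick a finite field $F$ large enough that \cite{19} guarantees the existence of a degree-$(r+1)$ polynomial $g\in F[x]$ constant on each block of a prescribed partition, fix a set $A\subset F$ of size $n$, and partition $A$ into $n_1$ blocks: $n_1-1$ blocks of size $r+1$ together with one short block $A_{n_1}$ of size $s:=r+1-n_2$. The identity $n=n_1(r+1)-n_2$ guarantees that this partition exhausts $A$. The hypothesis $n_2\ne r$ yields $s\ne 1$, and $n_1\le n_2$ forces $n_2\ge 1$ and hence $s\le r$. Therefore $2\le s<r+1$, which is precisely the range $s\ne 0,1$ required by Construction~\ref{4}.

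Next, I would apply Construction~\ref{4} with this partition and the polynomial $g$. The resulting code has length $n$, dimension $k$, and locality $r$ by construction, and by Theorem~\ref{Barg} its minimum distance satisfies
\[
d\ \ge\ n-k-\left\lceil \frac{k}{r}\right\rceil+1.
\]
Writing $k=ur+v$ with $0\le v\le r-1$, the hypothesis $u+v>r$ places us in the regime where $v\ge 1$ is the relevant case (the degenerate $v=0$ case does not arise for parameters satisfying $n_1\le n_2$, as a quick counting argument using $n\le n_1 r$ shows). Then $\lceil k/r\rceil = u+1$, so the lower bound reads $d\ge n-k-u$. Combined with the matching upper bound $d\le n-k-u$ extracted from Theorem~\ref{t} in the same regime, we conclude $d=n-k-u$, and the bound~(\ref{t1}) is attained.

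The main obstacle is the auxiliary assumption in Construction~\ref{4} that $r\mid k+1$, which need not hold for arbitrary $k=ur+v$. The authors of \cite{19} call this assumption nonessential, and the strategy to remove it is to redistribute the free coefficients $a_{i,j}$ in the encoding polynomial $f_a(x)$ so that the total number of free parameters is exactly $k$ while the leading term in the bi-degree ordering (by powers of $g$ and powers of $x$) is still $g(x)^{u}x^{v}$ (or $g(x)^{u}x^{v-s}h_{A_{n_1}}(x)$ when $v\ge s$). The degree analysis that underlies Theorem~\ref{Barg} depends only on this leading term, so the lower bound $d\ge n-k-u$ survives, and locality is unaffected because every block $A_i$ continues to be a fiber of $g$ and hence a repair group. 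Spelling out the adjusted encoding map and verifying the dimension and degree counts is the principal piece of careful but routine bookkeeping needed to finish the proof.
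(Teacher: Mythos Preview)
Your proposal is correct and follows essentially the same route as the paper: the paper's proof is the single line ``This follows from Theorems~\ref{Barg} and~\ref{valueofpsi},'' and you have simply unpacked that citation by checking that $s=r+1-n_2\in\{2,\dots,r\}$ so Construction~\ref{4} applies, that $v\ge 1$ is forced (since $n_1\le n_2\le r$ gives $n\le n_1 r\le r^2$, ruling out $v=0$ with $u+v>r$), and that the Tamo--Barg lower bound $d\ge n-k-u$ then meets the upper bound of Theorem~\ref{t}. Your discussion of the nonessential hypothesis $r\mid k+1$ is more detail than the paper supplies, but is consistent with how \cite{19} handles the general case.
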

\begin{proof}
This follows from Theorems~\ref{Barg} and~\ref{valueofpsi}.
\end{proof}

Modifying the construction in the above theorem, we show that the bound~(\ref{t1}) is also tight in other cases.

\begin{construction}\label{const}
Let $F$ be a finite field, and let $A\subset F$ be a subset such that $|A|=n$.
\begin{enumerate}
                      \item Since $n=n_1(r+1)-n_2=n_1r-(n_2-n_1)$, let $\mathcal{A}$ be a partition of $A$ into $n_1$ subsets $A_1,\cdots,A_{n_1}$ such that $|A_i|=r,1\le i\le n_1-1$ and $|A_{n_1}|=s=r-(n_2-n_1)> 1$. Let $g(x)$ be a polynomial of degree $r$, such that its powers $1,g,\cdots,g^{n_1-1}$ span the algebra $F_{\mathcal{A}}[x]$. W.L.O.G., we assume that $g$ vanishes on the set $A_{n_1}$ and $u+v=s$ (this assumption is nonessential, in fact we only need $u+v\le s$).
                      \item Let $a=(a_0,\cdots,a_{r-2})\in F^k$ be the input information vector, such that $a_i$ is a vector of length $u+1$ for $0\le i\le s-1$ and $a_i$ is a vector of length $u$ for $i\ge s$. Define the encoding polynomial
\[
f_a(x)=\sum_{i=0}^{s-1}\sum_{j=0}^{u}a_{i,j}g(x)^jx^i+\sum_{i=s}^{r-1}\sum_{j=0}^{u-1}a_{i,j}g(x)^jx^{i-s}h_{A_{n_1}}(x),
\]
where $h_{A_{n_1}}=\prod_{a\in A_{n_1}}(x-a)$.
                    \end{enumerate}
The code is defined as the set of evaluations of $f_a(x)$, $a\in F^k$.

\end{construction}

\begin{theorem}\label{tightthm}
Keep the notation as above. The code given in Construction~\ref{const} is an $[n,k]$ LRC with locality $r-1$ and minimum distance
\[
d\ge n-k-u+1.
\]
\end{theorem}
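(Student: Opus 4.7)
The plan is to verify, in order, three things: that the encoding map $a\mapsto f_a$ has rank $k$, that a nonzero $f_a$ vanishes at most $k+u-1$ times on $A$, and that each coordinate indexed by $\alpha\in A$ is recoverable from at most $r-1$ others. Throughout I would exploit the $g$-adic decomposition used in Proposition~\ref{3} and Construction~\ref{4}.

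For the dimension, I would group $f_a$ by powers of $g$ as $f_a=\sum_{j=0}^{u} g^{j}\chi_{j}$, where $\chi_u(x)=\sum_{i=0}^{s-1} a_{i,u}x^i$ and, for $0\le j\le u-1$,
\[
\chi_j(x)=\sum_{i=0}^{s-1} a_{i,j}x^i + h_{A_{n_1}}(x)\sum_{i=s}^{r-2} a_{i,j}\, x^{i-s}.
\]
Each $\chi_j$ has degree less than $\deg g=r$, so the $g$-adic expansion of any polynomial of degree less than $(u+1)r$ is unique; $f_a\equiv 0$ therefore forces every $\chi_j\equiv 0$, and splitting $\chi_j$ into its part of degree less than $s=\deg h_{A_{n_1}}$ and its multiple of $h_{A_{n_1}}$ extracts $a_{i,j}=0$ for all $i,j$. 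Since the degree bound below gives $\deg f_a<n$, the evaluation map on $A$ inherits this injectivity.

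For the distance, I would bound $\deg f_a$ directly: the monomial $g^{j}x^{i}$ in the first sum has degree $rj+i\le ru+s-1$, and the monomial $g^{j}x^{i-s}h_{A_{n_1}}$ in the second sum has degree $rj+i\le r(u-1)+r-2=ru-2$. Hence $\deg f_a\le ru+s-1=k+u-1$, using $s=u+v$ and $k=ur+v$, so a nonzero $f_a$ has at most $k+u-1$ zeros in $A$, giving $d\ge n-k-u+1$.

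For the locality, I would restrict $f_a$ to each block. On $A_\ell$ with $\ell<n_1$, $g$ is constantly $\alpha_\ell$, so $f_a|_{A_\ell}(x)$ is a polynomial in $x$ of degree at most $\max(s-1,\,r-2)=r-2$ evaluated at $r$ distinct points; each coordinate in $A_\ell$ is therefore a fixed Lagrange combination of the other $r-1$. On $A_{n_1}$, both $g$ and $h_{A_{n_1}}$ vanish, so $f_a|_{A_{n_1}}=\sum_{i=0}^{s-1}a_{i,0}x^i$ on the $s$ points of $A_{n_1}$; the intended $(r-1)$-sized repair set for each such coordinate combines the other $s-1$ in-block symbols with symbols from a neighbouring block $A_\ell$, using that $f_a|_{A_\ell}$ already carries the linear functionals $\sum_j a_{i,j}\alpha_\ell^j$ for $i<s$. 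The step requiring the most care is precisely this $A_{n_1}$ case: the restricted polynomial space has the same dimension as the number of in-block evaluation points, so there is no redundancy inside $A_{n_1}$ alone, and the cross-block functionals must be shown to span enough of the $(a_{i,0})_{i<s}$ coordinate functionals to reconstruct the erased symbol. The dimension and distance parts reduce to the unique $g$-adic expansion and a direct degree count, both standard.
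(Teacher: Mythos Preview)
Your distance bound via $\deg f_a\le k+u-1$ and your injectivity argument via the $g$-adic expansion are correct and match the paper's approach (the paper simply records the degree bound and does not spell out injectivity at all). Your locality argument for the blocks $A_\ell$ with $\ell<n_1$ also coincides with the paper's: the restriction has degree at most $r-2$ on $r$ points, so Lagrange interpolation from the other $r-1$ in-block points recovers any erased symbol.

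The genuine gap is exactly where you flag it. On $A_{n_1}$ the restriction is $\sum_{i=0}^{s-1}a_{i,0}x^i$, of degree up to $s-1$ on $s$ points, so in-block repair is impossible in general. The paper does \emph{not} invoke cross-block information here: it simply asserts that interpolating the remaining $s-1$ points of $A_{n_1}$ yields a polynomial of degree at most $s-2$ which recovers the missing value. As the encoding formula in Construction~\ref{const} is written, that assertion is unsupported (it would require the coefficient $a_{s-1,0}$ to be absent, in analogy with Construction~\ref{4}, but the stated indexing and vector lengths do not drop it). Your proposed cross-block fix does not close the gap either: the linear functionals carried by $f_a|_{A_\ell}$ are of the form $\sum_{j=0}^{u}a_{i,j}\alpha_\ell^{\,j}$, which entangle $a_{i,0}$ with all of $a_{i,1},\dots,a_{i,u}$, and there is no evident way to isolate the needed $a_{i,0}$ from only $r-1$ symbols. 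In short, your argument agrees with the paper on distance and on the generic blocks, and you have correctly identified a difficulty in the $A_{n_1}$ case that the paper's own proof, as written, does not actually resolve.
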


\begin{proof}
Since the encoding is linear and the encoding polynomials have degree at most
\[
\max\{ur+s-1,(u-1)r+r-1\}=ur+u+v-1=k+u-1
\]
we have $d\ge n-deg(f)\ge n-k-u+1$.
The locality property is similar to Construction~(\ref{4}). If the erased symbol $f_a(x)$ lies in $x\in A_{n_1}$, by interpolating the other $s-1$ points in $A_{n_1}$  we get a polynomial of degree at most $s-2$ to recover $f_a(x)$. Otherwise, we use $r-1$ interpolation points to find a polynomial of degree at most $r-2$ to recover $f_a(x)$. So, it is an LRC code with locality ($r-1$). The result follows.
\end{proof}

As a corollary, we obtain more tight range for the bound~(\ref{t1}).
\begin{cor}
When $n_1< n_2$ and $u+v+n_2-n_1\le r$, the bound~(\ref{t1}) is achievable.
\end{cor}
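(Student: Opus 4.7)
The plan is to verify that, under the stated hypotheses, the upper bound of Theorem~\ref{t} and the lower bound produced by Construction~\ref{const} coincide, thereby pinning down the minimum distance exactly. The statement is set up so that this becomes essentially a matter of checking parameter inequalities; no new technical work is required.

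First I would observe that the strict inequality $n_1 < n_2$ implies $n_1 \le n_2$, so Theorem~\ref{t} applies and gives $d \le n - k + 1 - (\lceil (k-1)/(r-1) \rceil - 1)$. Writing $k = ur + v$ with $0 \le v \le r - 1$, the extra hypothesis $u + v + n_2 - n_1 \le r$ together with $n_2 - n_1 \ge 1$ forces $u + v \le r - 1$. Invoking the explicit case analysis of the ceiling that is carried out immediately after Theorem~\ref{t}, this puts us in the first branch of the formula and simplifies the upper bound to $d \le n - k - u + 1$.

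Next I would verify that Construction~\ref{const} is applicable in this regime. The length decomposes as $n = n_1 r - (n_2 - n_1)$, so the block structure with $|A_i| = r$ for $1 \le i \le n_1 - 1$ and $|A_{n_1}| = s = r - (n_2 - n_1)$ exists; the requirement $s > 1$ follows from $u + v \ge 2$ combined with the hypothesis (the degenerate $u + v \le 1$ cases are trivial to handle separately). The technical requirement $u + v \le s$ noted as nonessential in Construction~\ref{const} is exactly the hypothesis $u + v + n_2 - n_1 \le r$ rewritten. Theorem~\ref{tightthm} then produces an $[n, k]$ LRC of locality $r - 1$ (hence also locality $r$, which is what the bound of Theorem~\ref{t} pertains to) with $d \ge n - k - u + 1$.

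Matching the two bounds gives $d = n - k - u + 1$, which is exactly the right-hand side of (\ref{t1}) in the case $u + v \le r$, proving tightness. The only step that requires any thought is ensuring that the three inequalities involved — the hypothesis of Theorem~\ref{t}, the simplification of the ceiling function, and the applicability of Construction~\ref{const} — are simultaneously satisfied, and the condition $u + v + n_2 - n_1 \le r$ is precisely what makes that happen. This is the point that one should verify carefully, but it is not a genuine obstacle.
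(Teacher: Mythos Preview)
Your proposal is correct and follows the same route as the paper: invoke Theorem~\ref{t} for the upper bound, invoke Theorem~\ref{tightthm} for the lower bound, and reconcile the two by observing that a code with locality $r-1$ is a fortiori a code with locality $r$. The paper compresses all of this into a single sentence, whereas you spell out the parameter checks (in particular that $u+v+n_2-n_1\le r$ is exactly the condition $u+v\le s$ needed in Construction~\ref{const}, and that it forces $u+v\le r$ so the first branch of the case split after Theorem~\ref{t} applies); this added detail is welcome and does not change the argument.
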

\begin{proof}
We can view the code constructed above as an LRC with locality $r$, then the corollary follows directly from Theorems~\ref{t} and~\ref{tightthm}.
\end{proof}
\section{Graph-based Construction of LRCs with Arbitrary Locality and Availability $2$}
Very recently, Wang et al.~\cite{WZL15} proposed a binary LRC construction achieving any locality and availability with very high code rate.
In this section, we first give a graphical model for binary LRCs. Secondly, we consider the special case $t=2$, i.e., there are two
disjoint repair ways for any coordinate. We give a high rate code construction. Comparing with the construction~\cite{WZL15}, our
codes have a slight decline of rate, however, our codes have much larger minimum distances.

Recall that an LRC $C\, [n,k,d]$ with locality $r$ and availability $t$ satisfies the following property: for any codeword $y\in C$, any symbol $y_i$ of $y$
can be computed from some other $r$ symbols of $y$, and furthermore there are $t$ disjoint ways to reconstruct $y_i$.

\begin{prop}[\cite{17}]
For a linear code $C\, [n,k,d]$ with locality $r$ and availability $t$, the rate of the code satisfies
\[
 \frac{k}{n}\leq \prod_{i=1}^t \frac{1}{1+\frac{1}{ir}}.
\]
\end{prop}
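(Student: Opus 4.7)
The plan is to argue by induction on the availability parameter $t$, at each step exploiting one of the disjoint repair layers to extract a factor of $\frac{ir}{ir+1}$ from the rate.

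For the base case $t=1$, the statement reduces to the classical locality bound $k/n \le r/(r+1)$ of Gopalan et al. The standard argument covers the coordinate set $[n]$ by local repair groups of size at most $r+1$; each such group contributes at most $r$ linearly independent code symbols to the dimension, and a fractional-covering calculation gives the inequality.

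For the inductive step, suppose the bound is known for availability $t-1$. Given an LRC $C[n,k]$ with locality $r$ and availability $t$, fix for each coordinate $i$ pairwise disjoint repair sets $R_{i,1},\ldots,R_{i,t}$ of size at most $r$. I would construct an auxiliary code $C'$ of length $n'$ and dimension $k'$ that still has locality $r$ and availability $t-1$, and whose rate satisfies $k/n \le (k'/n')\cdot tr/(tr+1)$; the inductive hypothesis applied to $C'$ then closes the argument. The code $C'$ should be obtained by \emph{collapsing} the $t$-th layer: each set $\{i\}\cup R_{i,t}$ carries a parity relation in $C^\perp$, so a maximal essentially-disjoint subfamily of these parities allows one to puncture roughly $n/(tr+1)$ coordinates without decreasing the dimension, producing the desired rate improvement.

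The hard part will be the bookkeeping in this reduction. Because the $t$-th repair sets of different coordinates may overlap arbitrarily, one cannot simply puncture all of them simultaneously, and the disjoint-subfamily extraction can a priori cost a constant factor of rate that one needs to recover through a more delicate argument. An attractive alternative is an information-theoretic version: let $Y=(Y_1,\ldots,Y_n)$ be a uniformly random codeword so that $H(Y)=k\log|F|$, and sample a coordinate $i$ uniformly at random together with a uniformly random repair index $j\in[t]$; the relation $H(Y_i\mid Y_{R_{i,j}})=0$ shaves off an average entropy fraction of $1/(tr+1)$ per round, and iterating this conditioning across the $t$ layers yields the product bound. Either route requires a careful verification that the residual code inherits both locality $r$ and availability $t-1$, which is the combinatorial heart of the proof.
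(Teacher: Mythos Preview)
The paper does not prove this proposition at all: it is quoted verbatim as a known result from reference~\cite{17} (Tamo--Barg), with no accompanying argument. So there is no ``paper's own proof'' to compare your attempt against.

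As for the proposal itself, the inductive shape is correct---the original proof in~\cite{17} does proceed by recursion on the availability parameter---but neither of the two mechanisms you outline is yet a proof. In the puncturing route, once you delete coordinates belonging to the $t$-th repair layer, the surviving coordinates can lose some of their \emph{other} repair sets (those that happened to pass through deleted positions), so the residual code need not have availability $t-1$ with locality $r$; you flag this as ``the combinatorial heart of the proof'' but do not resolve it, and in fact a naive puncturing does fail here. In the entropy route, the sentence ``shaves off an average entropy fraction of $1/(tr+1)$ per round'' is not justified: a single repair relation $H(Y_i\mid Y_{R_{i,j}})=0$ by itself only yields the $t=1$ factor $r/(r+1)$, and you have not explained what conditioning scheme makes the $i$-th layer contribute exactly $ir/(ir+1)$ rather than some weaker factor. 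The actual argument in~\cite{17} avoids global puncturing: it fixes one coordinate, looks at the union of its $t$ disjoint recovery sets (a set of size at most $tr+1$), observes that inside that union every symbol acquires an induced recovery set of size $\le r$ from the parity it participates in, and then recurses on availability $t-1$ \emph{within that local set} to bound the rank contribution; summing these local bounds over a cover of $[n]$ gives the product formula. If you want to complete your sketch, that is the missing idea.
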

The bound in the above proposition can not be achieved in most cases. Wang et al.~\cite{WZL15} gave a construction from the incidence matrices of some combinatorial designs:
\begin{prop}[\cite{WZL15}]
For any $r$ and $t$, there are binary linear codes $C\, [n,k,d]$ with locality $r$ and availability $t$ satisfying
\[
 \frac{k}{n}=\frac{r}{r+t}\qquad \textrm{and}\qquad d=t+1.
\]
\end{prop}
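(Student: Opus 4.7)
The plan is to realize $C$ as the $\mathbb{F}_2$-null space of the point-block incidence matrix $H$ of an appropriately chosen combinatorial configuration $\mathcal{B}$ on the point set $[n]$, subject to: (a) each block $B \in \mathcal{B}$ has size $r+1$; (b) each point lies in exactly $t$ blocks; (c) any two distinct blocks share at most one point. With this setup each block $B$ supplies the parity $\sum_{i \in B} y_i = 0$, so whenever $i \in B$ the coordinate $y_i$ is the sum of the other $r$ coordinates of $B$. Conditions (b) and (c) together say that the $t$ blocks through $i$ carve out $t$ pairwise-disjoint repair sets of size $r$, which is exactly locality $r$ and availability $t$.

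For the distance I first prove $d \ge t+1$ from availability alone: if $c \in C$ is nonzero with $c_i \ne 0$, then each of the $t$ disjoint parity relations at position $i$ expresses $c_i$ as a sum over its repair set and hence forces a further nonzero coordinate in that set, so $\mathrm{wt}(c) \ge 1 + t$. For the matching bound $d \le t+1$ I would construct a codeword of weight exactly $t+1$ by exhibiting a subset $S \subseteq [n]$ of size $t+1$ that meets every block of $\mathcal{B}$ in an even number of points; its indicator is then a codeword since $H \mathbf{1}_S = 0$. The natural recipe is to chain together blocks that share a single point, as in the toy configuration with $n=6$, $r=t=2$, $\mathcal{B} = \{\{1,2,3\},\{1,4,5\},\{2,4,6\},\{3,5,6\}\}$, where $S=\{1,2,4\}$ supports such a weight-$3$ codeword.

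The subtle step is the rate identity $k/n = r/(r+t)$, equivalently $\mathrm{rank}_{\mathbb{F}_2} H = nt/(r+t)$. Incidence counting forces $|\mathcal{B}| = nt/(r+1)$, which overshoots the target rank by $nt(t-1)/\bigl((r+1)(r+t)\bigr)$, so the design must supply precisely this many $\mathbb{F}_2$-linear dependencies among its block indicators. A natural source is resolvability: if $\mathcal{B}$ decomposes into $t$ parallel classes each of which partitions $[n]$ into blocks of size $r+1$, then the sum of the indicator vectors of any class equals the all-ones vector, contributing $t-1$ free dependencies. One must then exhibit a resolvable configuration whose remaining dependencies account exactly for the rest of the deficit, for instance by specializing to transversal-design-type constructions on $n$ equal to a suitable multiple of $\mathrm{lcm}(r+1,r+t)$. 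Completing this combinatorial existence step uniformly in $(r,t)$ is the real obstacle; once the right design is in hand, the rate, locality, availability, and distance claims all follow from the framework above.
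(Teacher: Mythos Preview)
The paper does not prove this proposition; it is quoted from \cite{WZL15} with only the one-line hint ``a construction from the incidence matrices of some combinatorial designs.'' So there is no in-paper argument to compare against, and your framework---taking $H$ to be the point--block incidence matrix of a configuration with blocks of size $r+1$, replication number $t$, and pairwise block intersections of size at most one---is exactly the intended setting and matches that hint.

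Within that framework your locality/availability verification and the lower bound $d\ge t+1$ are correct. The gap you yourself flag, however, is genuine and is the whole content of the result: you have not exhibited, for arbitrary $(r,t)$, a design whose incidence matrix has $\mathbb{F}_2$-rank exactly $nt/(r+t)$ and which carries a weight-$(t+1)$ codeword. Resolvability alone supplies only $t-1$ dependencies among the $nt/(r+1)$ block indicators, whereas you need $nt(t-1)/((r+1)(r+t))$ of them; these two counts coincide only for the single value $n=(r+1)(r+t)/t$, which need not even be an integer, so ``specializing to transversal-design-type constructions on a suitable multiple of $\mathrm{lcm}(r+1,r+t)$'' does not close the gap. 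In the case $t=2$ your toy example is the vertex--edge incidence matrix of $K_4$, and more generally $K_{r+2}$ works, with the rank computation being the standard fact that this matrix has $\mathbb{F}_2$-rank $r+1$; but for general $t$ one must actually name the design from \cite{WZL15} and compute its rank. As written, your proposal is a correct outline with the decisive construction step left open.
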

Note that for fixed $t$, the minimum distance of the construction in the above proposition is fixed. But as discussed at the beginning of this paper, the minimum distance of LRCs is a very important metric, especially for multiple erasures. So how to construct LRCs with high rate, large minimum distance, any locality and availability is the issue we care in the following content.

Next, we only consider the binary case. The method works as well for non-binary cases. To construct a binary LRC $C\, [n,k,d]$ with locality $r$ and availability $t$, it is equivalent to construct a parity check matrix $H$ such that each column has Hamming weight $\geq t$ and each row has Hamming weight $\leq r+1$ such that the inner product of any two rows is $1$.
Note that rows of $H$ might be linearly dependent. Corresponding to this parity check matrix
$$H=(h_{i,j})_{1\leq i\leq m,\,1\leq j\leq n},$$
 there is a bipartite graph $G$ whose bi-adjacent matrix is $H$. Explicitly, the graph $G=(V,E)$ is defined as following:
\begin{itemize}
  \item The set $V$ of vertices is separated into two parts $\{c_1,c_2,\cdots,c_m\}$ and $\{x_1,x_2,\cdots,x_n\}$ which represent rows and columns, respectively. The vertices $\{c_1,c_2,\cdots,c_m\}$ are always called constraints, and the vertices $\{x_1,x_2,\cdots,x_n\}$ are called variables.
  \item The set $E$ of edges: there are no edges connecting vertices in the same part, all the edges are connecting vertices from the distinct parts.  Precisely, there is an edge between $c_i$ and $x_j$ if and only if $h_{i,j}=1$ for any $1\leq i\leq m,\,1\leq j\leq n$.
\end{itemize}

\begin{exm}\label{exm}
For the matrix
\begin{equation*}
H=\left(
    \begin{array}{ccccccccc}
      1 & 0 & 0 & 1 & 0 & 0 & 1 & 0 & 0 \\
      0 & 1 & 0 & 0 & 1 & 0 & 0 & 1 & 0 \\
      0 & 0 & 1 & 0 & 0 & 1 & 0 & 0 & 1 \\
      1 & 0 & 0 & 0 & 1 & 0 & 0 & 0 & 1 \\
      0 & 1 & 0 & 0 & 0 & 1 & 1 & 0 & 0 \\
      0 & 0 & 1 & 1 & 0 & 0 & 0 & 1 & 0 \\
    \end{array}
  \right),
\end{equation*}
the corresponding graph is Figure 1. The matrix $H$ defines a $[9,4]$ binary code with locality $2$ and availability $2$.
\begin{figure}[h]\label{fig1}
\centering
\begin{tikzpicture}[scale=0.75]
     \tikzstyle{edge} = [draw,thick,-,black]
     \tikzstyle{point1}= [draw, circle, scale=0.6]
     \tikzstyle{point2}= [draw,scale=.8]

     \node[point1] (v1) at (-8,2) {$x_1$};
     \node[point1] (v2) at (-6,2) {$x_2$};
     \node[point1] (v3) at (-4,2) {$x_3$};
     \node[point1] (v4) at (-2,2) {$x_4$};
     \node[point1] (v5) at (0,2) {$x_5$};
     \node[point1] (v6) at (2,2) {$x_6$};
     \node[point1] (v7) at (4,2) {$x_7$};
     \node[point1] (v8) at (6,2) {$x_8$};
     \node[point1] (v9) at (8,2) {$x_9$};

     \node[point2] (v10) at (-5,-2) {$c_1$};
     \node[point2] (v11) at (-3,-2) {$c_2$};
     \node[point2] (v12) at (-1,-2) {$c_3$};
     \node[point2] (v13) at (1,-2) {$c_4$};
     \node[point2] (v14) at (3,-2) {$c_5$};
     \node[point2] (v15) at (5,-2) {$c_6$};

     \draw[edge] (v10) -- (v1) -- (v13) -- (v5) -- (v11) -- (v2) -- (v14) -- (v6) -- (v12) -- (v3) -- (v15) -- (v4) -- (v10) -- (v7) -- (v14);
     \draw[edge] (v11) -- (v8) -- (v15);
     \draw[edge] (v13) -- (v9) -- (v12);

\end{tikzpicture}
\caption{The bipartite graph $G$ representing $H$}
\end{figure}
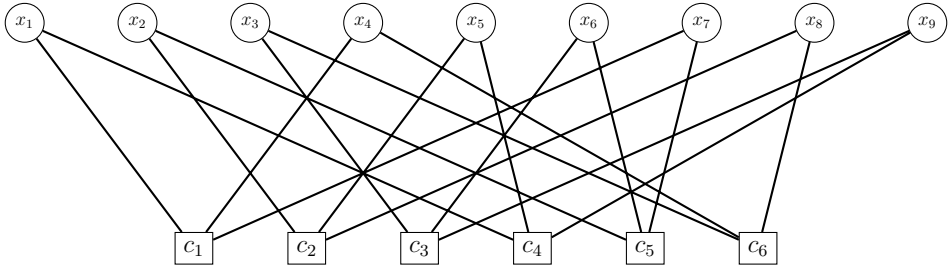

\end{exm}

In the literature, the bipartite graph is called Tanner graph. Recall that the degree of a vertex $\upsilon\in V$ is defined to be the number of edges connecting the vertex, and denoted by $\deg(\upsilon)$. In this setting, in order to construct a binary LRC $C\, [n,k,d]$ with locality $r$ and availability $t$, the problem is reduced to constructing a bipartite graph $G$ with vertices $\{c_1,c_2,\cdots,c_m\}\cup \{x_1,x_2,\cdots,x_n\}$ such that $\deg(c_i)\leq r+1$ and $\deg(x_j)\geq t$ for any $1\leq i\leq m,\,1\leq j\leq n$. Meanwhile, to simplify the discussion, we only consider the regular case. That is, the Tanner graph is a regular graph, where
$$\deg(c_1)=\deg(c_2)=\cdots=\deg(c_m)=r+1$$
 and
 $$ \deg(x_1)=\deg(x_2)=\cdots=\deg(x_n)=t.$$
In this case, we have the following lower bound for the corresponding code:
\begin{prop}
The rate $\rho$ of the binary code is
\[
 \rho\geq 1-\frac{t}{r+1}.
\]
\end{prop}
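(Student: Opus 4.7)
The plan is to derive the bound by a straightforward double-counting argument on the edges of the Tanner graph, combined with the standard inequality $k \geq n - m$ for a code defined by $m$ parity-check equations in $n$ variables.

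First, I would note that the dimension $k$ of the binary code $C$ is at least $n - m$, since the code is defined as the kernel of the $m \times n$ parity check matrix $H$, and $\mathrm{rank}(H) \leq m$ regardless of any linear dependencies among the rows. Hence
\[
\rho = \frac{k}{n} \geq 1 - \frac{m}{n}.
\]
It therefore suffices to show that $m/n \leq t/(r+1)$, or equivalently $m(r+1) \leq nt$.

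Next, I would count the number of edges $|E|$ in the Tanner graph $G$ in two ways. Summing the degrees on the constraint side, since every constraint vertex has degree exactly $r+1$ in the regular case, gives
\[
|E| = \sum_{i=1}^{m} \deg(c_i) = m(r+1).
\]
Summing the degrees on the variable side, since every variable vertex has degree exactly $t$, gives
\[
|E| = \sum_{j=1}^{n} \deg(x_j) = nt.
\]
Equating these two expressions yields $m(r+1) = nt$, so $m = nt/(r+1)$.

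Substituting back into the first inequality gives
\[
\rho \geq 1 - \frac{m}{n} = 1 - \frac{t}{r+1},
\]
which is exactly the claimed bound. There is no real obstacle here: the argument is essentially a handshake lemma applied to the bipartite graph, and the regularity hypothesis makes the edge count an equality rather than an inequality. If one wished to state the bound without the regularity assumption, one would only need $\deg(c_i) \leq r+1$ and $\deg(x_j) \geq t$, which still yields $m(r+1) \geq |E| \geq nt$ and hence the same conclusion.
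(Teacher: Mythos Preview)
Your proof is correct and follows essentially the same approach as the paper: both count the ones in $H$ (equivalently, the edges of the Tanner graph) to obtain $m(r+1)=nt$, and then use $\rho = 1 - \mathrm{Rank}(H)/n \geq 1 - m/n$ to conclude.
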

\begin{proof}
By counting the number of $1$'s in the bi-adjacent matrix $H$ of $G$, we have
\[
   m(r+1)=nt,\qquad\textrm{or}\qquad\frac{m}{n}=\frac{t}{r+1}.
\]
So the rate of the code is
\[
  \rho=1-\frac{\mathrm{Rank}(H)}{n}\geq 1-\frac{m}{n}=1-\frac{t}{r+1},
\]
where $\mathrm{Rank}(H)$ is the $\mathbb{F}_2$-rank of $H$.
\end{proof}
It is a very tough work to compute the exact value of $\mathrm{Rank}(H)$ in general. But it is an important issue in many application scenarios. For graphs with strong combinatorial property, computing $\mathrm{Rank}(H)$ attracts lots of interests~\cite{MacMann68,Smith69,BV92,EHKX99,CX03,JW04}.

There are advantages of graphical representation of codes~\cite{KFL98,Forney01}. It generalizes low-density parity-check codes, convolutional codes, trellis codes, classical linear system theory, behavior systems theory, etc. Fast algorithms on graphs give efficient encoding and decoding algorithms, such as the sum-product algorithm, BCJR algorithm, Viterbi algorithm, etc. In our specific case of LRCs, when the information of any node is not available or damaged, it is easy to recover the information by adding the information of neighboring variable vertices of any neighbor of the node in the graph. Even if many variable nodes are damaged, we can track in the graph for the intact information to recover the damaged nodes provided that the number of damaged nodes is less than the minimum distance of the code. This is our motivation to enlarge the minimum distance of the LRCs with the required locality and availability as large as possible.

Next, we restrict ourselves to the case $t=2$ where there are two disjoint repair options for each coordinate.
In other word, the degree of $x_j$ ($1\leq j\leq n$) is two in the Tanner graph. In this case, the rate of the code is
\[
    \geq\frac{r-1}{r+1}
\]
which might be smaller than that of~\cite{WZL15} by difference (at most)
\[
    \frac{r}{r+2}-\frac{r-1}{r+1}=\frac{2}{(r+1)(r+2)}.
\]
By slight sacrifice of the code rate, we can construct codes with much larger minimum distance. More concretely, the code in~\cite{WZL15} has
minimum distance $3$, but our codes have minimum distance $O(\log n)$.

Since $\deg(x_j)=2$ for all $1\leq j\leq n$, the Tanner graph $G$ can be reduced to a smaller graph $G_{red}$:
\begin{itemize}
  \item The vertices are $\{c_1,c_2,\cdots,c_m\}$.
  \item There is an edge between $c_i$ and $c_j$ if and only if $c_i$ and $c_j$ connect some $x_l$ simultaneously in the graph $G$.
\end{itemize}
The reduced graph $G_{red}$ is an ($r+1$)-regular graph. One could also refer the reduced graph $G_{red}$ as another graphical model of the code $C$. The difference between the two graphs is that Tanner graph considers constraints (or rows) as one part of the bipartite graph, but the reduced graph considers the constraints as the edges of the graph.

\begin{exm}
Continue Example~\ref{exm}. The reduced graph $G_{red}$ is Figure~2.
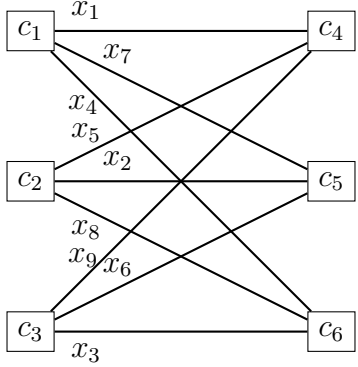
\begin{figure}[h]
\centering
\begin{tikzpicture}[scale=0.5]
     \tikzstyle{edge} = [draw,thick,-,black]
     \tikzstyle{point1}= [draw, circle, scale=0.8]
     \tikzstyle{point2}= [draw,scale=1]

     \node[point2] (v1) at (-4,4) {$c_1$};
     \node[point2] (v2) at (-4,0) {$c_2$};
     \node[point2] (v3) at (-4,-4) {$c_3$};
     \node[point2] (v4) at (4,4) {$c_4$};
     \node[point2] (v5) at (4,0) {$c_5$};
     \node[point2] (v6) at (4,-4) {$c_6$};

     \draw[edge] (v1) -- (v4) node[very near start, above] {$x_1$};
     \draw[edge] (v1) -- (v5) node[near start, above] {$x_7$};
     \draw[edge] (v1) -- (v6) node[very near start, below] {$x_4$};
     \draw[edge] (v2) -- (v4) node[very near start, above] {$x_5$};
     \draw[edge] (v2) -- (v5) node[near start, above] {$x_2$};
     \draw[edge] (v2) -- (v6) node[very near start, below] {$x_8$};
     \draw[edge] (v3) -- (v4) node[very near start, above] {$x_9$};
     \draw[edge] (v3) -- (v5) node[near start, above] {$x_6$};
     \draw[edge] (v3) -- (v6) node[very near start, below] {$x_3$};

\end{tikzpicture}
\caption{The reduced graph representing $H$}
\end{figure}
\end{exm}

To analyze the minimum distance of the code, we need one more index of the Tanner graph or the reduced graph. The girth of a graph is the length of a shortest
cycle in the graph. Since only graphs without multi-edges are involved in this paper, it is easy to see that the girth of a graph is $0$ or $\geq 3$, and the girth of a bipartite is an even integer: $0$ or $\geq 4$.
\begin{thm}\label{construction}
Let $G_{red}$ be an $(r+1)$-regular graph with $m$ vertices and girth $g$. Extend the graph $G_{red}$ to a bipartite graph $G$ with regularity $2$ and $r+1$. The null space of the bi-adjacent matrix $H$ of $G$ defines our binary linear code $C$. Then the code $C$ has length $\frac{m(r+1)}{2}$, dimension $\geq \frac{m(r-1)}{2}+1$, minimum distance $g$, locality $r$, and availability $2$.
\end{thm}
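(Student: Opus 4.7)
The plan is to verify the five claims in the theorem essentially one at a time, exploiting the combinatorial dictionary between edges/vertices of $G_{red}$ and columns/rows of the parity check matrix $H$.

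First, the length. The bipartite graph $G$ is constructed so that the variable vertices $x_1,\dots,x_n$ are in bijection with the edges of $G_{red}$ (each edge, viewed as incident to its two endpoints, is a variable connected to two constraints). By the handshake lemma applied to the $(r+1)$-regular graph $G_{red}$, the number of edges is $m(r+1)/2$, giving $n=m(r+1)/2$.

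Second, the dimension. Each column of $H$ has Hamming weight exactly $2$ because each $x_j$ corresponds to an edge and so is adjacent to exactly its two endpoints. Over $\mathbb{F}_2$, this forces the sum of all rows of $H$ to be zero, so $\mathrm{rank}(H)\le m-1$. Hence
\[
\dim C \;=\; n-\mathrm{rank}(H)\;\ge\; \frac{m(r+1)}{2}-(m-1)\;=\;\frac{m(r-1)}{2}+1.
\]

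Third, and this is where the combinatorics bites, the minimum distance. A vector $\mathbf{c}\in\mathbb{F}_2^n$ lies in $C$ iff $H\mathbf{c}^{T}=0$, which translates, via the dictionary ``variable $=$ edge, constraint $=$ vertex of $G_{red}$'', to the condition that the edge subset $S=\mathrm{supp}(\mathbf{c})$ meets every vertex of $G_{red}$ in an even number of edges, i.e., $S$ is an (even) Eulerian subgraph. The main step is to show that any nonempty such $S$ contains a cycle: starting at any vertex of $S$ we can always leave along an unused edge by the even-degree condition, and a standard walk argument then produces a closed trail, hence a cycle, of length $\ge g$. Conversely any cycle of $G_{red}$ is itself an Eulerian subgraph, so the shortest cycle (of length $g$) gives a codeword of weight $g$. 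Combining, $d=g$. The technical point here, and the one obstacle worth flagging, is confirming that $H$ has no spurious null vectors beyond these edge-subset parity conditions; this is automatic from the bi-adjacency interpretation.

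Fourth, the locality and availability. Fix any coordinate $x_j$, corresponding to an edge $\{c,c'\}$ of $G_{red}$. The row of $H$ indexed by $c$ yields a parity relation $x_j=\sum_{i} x_i$ summed over the $r$ remaining edges incident to $c$; this witnesses locality $r$. The row indexed by $c'$ gives a second such relation using the $r$ remaining edges at $c'$. These two recovery sets are disjoint because any common edge would be a second edge between $c$ and $c'$, producing a $2$-cycle and contradicting $g\ge 3$ (which we may assume, else $C$ is trivial from the distance statement). Hence availability $t=2$, completing the proof.
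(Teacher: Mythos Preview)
Your proof is correct and follows essentially the same approach as the paper. The paper only argues the minimum distance claim (treating the others as clear), and does so by the same cycle-chasing idea you use, just phrased in the bipartite graph $G$ rather than via your cleaner ``codewords $=$ Eulerian subgraphs of $G_{red}$'' dictionary. Your observation that the all-ones vector lies in the row space of $H$ (since every column has weight $2$) is what justifies the ``$+1$'' in the dimension bound, a point the paper does not spell out.
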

\begin{proof}
The only thing we need to prove is that minimum distance $d=g$. On one hand, W.L.O.G., let $c_1,c_2,\cdots,c_g$ be a cycle of length $g$ in $G_{red}$. Then it is extended to a cycle of length $2g$ in $G$, saying $c_1,x_1,c_2,x_2,\cdots,c_g,x_g$. By the fact $\deg(x_j)=2$, the restriction of the parity check matrix $H$ to the columns $x_1,x_2,\cdots,x_g$ is
\begin{equation*}
\begin{tabular}{|c|c|c|c|c|c|}
  \hline
   & $x_1$ & $x_2$ & $x_3$&\ldots&$x_g$\\
   \hline
  $c_1$ & 1 & 1 &0&\ldots&0\\
  $c_2$ & 0 & 1 &1&\ldots&0\\
  $c_3$ & 0 & 0 &1&\ldots&0\\
  \vdots & \vdots & \vdots &\vdots&$\ddots$&\vdots\\
  $c_g$ & 1 & 0 &0&\ldots&1\\
  \vdots & 0 & 0 &0&\vdots&0\\
  \hline
\end{tabular}
\end{equation*}
which defines a codeword with support set $\{1,2,\cdots,g\}$. So the minimum distance $d\leq g$.

On the other hand, let $c$ be a codeword of Hamming weight $d$. W.L.O.G, assume the support of $c$ is $\{1,2,\cdots,d\}$. By non-zero location chasing, we prove $d\geq g$. We can also assume the variables $x_1,x_2$ are connected to the constraint $c_1$. Now, $x_2$ has the other neighboring constraint, saying $c_2$. As the codeword $c$ must satisfy the constraint $c_2$, there is at least one $x_j$ connecting $c_2$ for some $1\leq j\leq d$, $j\neq 2$. If $j=1$ then we get a cycle of length $4$ in $G$, so $g=2\leq d$ and the proof is finished. Otherwise, assume $x_3$ connects $c_2$. If $x_3$ also connects $c_1$, then the proof is finished as the same as the previous. Otherwise, $x_3$ connects the other constraint. Then iterate the same procedure. One can finally get a cycle of length $\leq 2d$ in the graph $G$. So the girth of the reduced graph $G_{red}$ is $g\leq d$.

In conclusion, we have proved the minimum distance $d=g$.

\end{proof}
The theorem extends the result of~\cite[Proposition~2]{HFE04}. For non-binary case, Chen et al.~\cite{CBSWJ} proposed how to enlarge the minimum distance by choosing proper non-zero elements at the non-zero locations of $H$. By Theorem~\ref{construction}, in order to construct a binary LRC with rate $\frac{r-1}{r+1}$, locality $r$, availability $2$, and minimum distance as large as possible, we need to construct an ($r+1$)-regular graph with girth $g$ as large as possible. This latter problem of graph construction has been extensively studied in extremal graph theory.

Let $g(m,r)$ denote the largest possible girth of an ($r+1$)-regular graph of size at most $m$, then for fixed $r$ and asymptotically growing $m$ we have
\begin{equation}\label{girth}
    (\frac{4}{3}-o(1))\log_r m\leq g(m,r)\leq (2+o(1))\log_r m.
\end{equation}
The second inequality in~(\ref{girth}) is a version of the Moore bound~\cite[Theorem III.1]{Extr04}. Note that the Moore bound is not achievable in most cases. The girths of random Cayley graphs are tested in~\cite{GHSSV09}. The first explicit construction can be found in~\cite{Mar82} for graphs with degree $4$ and large girth $\geq 0.83\log_r m$ and those with arbitrary degree and large girth $\geq 0.44\log_r m$, the latter of which was later improved in~\cite{Imrich84} to $\geq 0.48\log_r m$. Erd{\"o}s and Sachs~\cite{ES63} described a simple procedure yielding families of graphs with large girth $\log_r m$. Examples of graphs with arbitrary degree and large girth $\geq \frac{4}{3}\log_r m$ are given in~\cite{BH83,Weiss84,LPS88,Mar88,Mor94}. Using these explicit constructions, we can obtain
\begin{thm}
Let $G_{red}$ be an $(r+1)$-regular graph with $n$ edges and girth $g=O(\log n)$. Extend the graph $G_{red}$ to a bipartite graph $G$ with regularity $2$ and $r+1$. The null space of the bi-adjacent matrix $H$ of $G$ defines our binary linear code $C$. Then the code $C$ has length $n$, dimension $\geq \frac{n(r-1)}{r+1}+1$, minimum distance $O(\log n)$, locality $r$, and availability $2$.
\end{thm}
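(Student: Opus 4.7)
The plan is to reduce this theorem to a direct corollary of Theorem~\ref{construction} combined with the extremal graph-theoretic results on the girth of regular graphs cited in the preceding paragraph. First I would translate the parameters of Theorem~\ref{construction}, which are stated in terms of the number of vertices $m$ of $G_{red}$, into the parameters of the present theorem, which are stated in terms of the number of edges $n$ of $G_{red}$. Since $G_{red}$ is $(r+1)$-regular, the handshake identity gives $n=\frac{m(r+1)}{2}$, hence $m=\frac{2n}{r+1}$. Substituting into the conclusion of Theorem~\ref{construction} yields length $\frac{m(r+1)}{2}=n$ and dimension at least $\frac{m(r-1)}{2}+1=\frac{n(r-1)}{r+1}+1$. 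The locality $r$ and availability $2$ are preserved verbatim, and the minimum distance equals the girth $g$ of $G_{red}$.

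Next I would justify that one can take the girth to be of order $\log n$. The cited explicit constructions (Erd{\"o}s--Sachs~\cite{ES63}, Lubotzky--Phillips--Sarnak~\cite{LPS88}, Margulis~\cite{Mar82,Mar88}, and the others listed) produce, for every integer $r\geq 1$ and arbitrarily large $m$, an $(r+1)$-regular graph on $m$ vertices with girth $\geq c\log_r m$ for an absolute constant $c>0$ (indeed $c=\frac{4}{3}$ is attainable). Choosing $G_{red}$ from such a family and substituting $m=\frac{2n}{r+1}$, one obtains
\[
    g \;\geq\; c\log_r m \;=\; c\log_r n - c\log_r\tfrac{r+1}{2} \;=\; \Omega(\log n),
\]
where $r$ is regarded as a fixed constant. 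The Moore bound in~(\ref{girth}) simultaneously forces $g\leq (2+o(1))\log_r m=O(\log n)$, so in fact $g=\Theta(\log n)$, consistent with the statement.

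Putting the two steps together closes the proof. No real obstacle arises: the content is essentially a bookkeeping translation from Theorem~\ref{construction} combined with an off-the-shelf invocation of known explicit high-girth regular graph constructions. The only delicate point to flag is that the statement writes ``$g=O(\log n)$'' where the intended meaning is ``$g=\Theta(\log n)$'' via the Moore upper bound matched up to constants by the explicit constructions; this should be made explicit so the reader understands that the lower bound on girth is what actually drives the minimum-distance claim.
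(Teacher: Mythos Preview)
Your proposal is correct and matches the paper's approach exactly: the paper presents this theorem without a separate proof, introducing it with ``Using these explicit constructions, we can obtain'', so it is intended as the immediate corollary of Theorem~\ref{construction} under the vertex-to-edge reparametrisation $n=\tfrac{m(r+1)}{2}$ together with the cited high-girth constructions. Your remark that the statement's ``$O(\log n)$'' should really be read as ``$\Theta(\log n)$'' (Moore bound above, explicit constructions below) is a worthwhile clarification the paper leaves implicit.
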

Comparing with the constructions in~\cite{WZL15,12}, our codes have a slight decline of rate, however, our codes have much larger minimum distances. Comparing with the construction of~\cite[Theorem 4.1]{19}, the minimum distances of their codes are very large apparently. On one hand, their construction relies highly on the size of the finite field, so their method can not be employed for the binary case. On the other hand, if their code rate achieves $\frac{r-1}{r+1}$, the minimum distance of their code degenerates to $1$. Actually, the minimum distance $O(\log n)$ in the above theorem is already optimal in the case $t=2$ by~\cite[Theorem~2.5]{Gallager63}.

\begin{rem}
Analogously to the performance of random linear codes, for general locality $r$ and availability $t\geq 3$, the codes constructed from random $(r+1,t)$-regular bipartite graphs have minimum distances with growing rate linearly to the length of the code with very high probability~\cite[Theorem~2.4]{Gallager63}. Within our knowledge, there is no deterministic construction for $(r+1,t)$-regular bipartite graphs (arbitrary $r$ and $t$) such that the corresponding codes have non-zero relative minimum distance $\frac{d}{n}$ asymptotically.
\end{rem}
\section{conclusions}
In the first part of this paper we studied the open problem in\cite{q}: when $n_1\le n_2$, what is the largest possible minimum distance for an $[n,k]$ LRC? How to construct an $[n,k]$ LRC with the largest possible minimum distance? For the first problem, we solve the linear integer programming in the case $n_1\le n_2$ and derive a new upper bound which is always better than the classic bound (\ref{2}). For the second problem, we find out that the construction of Tamo and Barg~\cite{19} is actually optimal when $n_1\le n_2$ and $u+v>r,v\neq r$. Using another interpolation polynomial, we present a construction of optimal LRCs when $n_1< n_2$ and $u+v+n_2-n_1\le r$.

In the second part of this paper, we presented a graphical model for binary LRC with any locality and any availability. In particular, for any locality and availability $2$, we use the deep results from extremal graph theory to give a code construction which produces good LRCs in the sense that these codes satisfy the locality and availability request and they have high code rates and large (indeed optimal) minimum distances.

\bibliographystyle{plain}
\bibliography{LRC}



\end{document}